\algrenewcommand{\algorithmiccomment}[1]{$\vartriangleright$ #1}
\algrenewcommand{\algorithmicreturn}{\textbf{Return: }}
\algnewcommand\algorithmicinput{\textbf{Input: }}
\algnewcommand\Input{\State \algorithmicinput}
\renewcommand{\sectionautorefname}{\S\@gobble}%
\renewcommand{\subsectionautorefname}{\S\@gobble}%
\renewcommand{\subsubsectionautorefname}{\S\@gobble}%
\newlength{\offsetpage}
\newtheorem{proposition}{Proposition}
\newtheorem{definition}{Definition}
\newtheorem{lemma}{Lemma}
\newtheorem{remark}{Remark}
\newcommand{\leqnomode}{\tagsleft@true\let\veqno\@@leqno}
\renewcommand{\phi}{\varphi}
\renewcommand{\leq}{\leqslant}
\renewcommand{\geq}{\geqslant}
\renewcommand{\epsilon}{\varepsilon}
\renewcommand{\imath}{\mathrm{i}}
\newlength{\restsubwidth}
\newlength{\restsubheight}
\newlength{\restsubmoreheight}
\newcommand{\rest}[2]{%
        \settowidth{\restsubwidth}{\ensuremath{#2}}
        \settoheight{\restsubheight}{\ensuremath{{}_{#2}}}
        \ensuremath{{#1\hskip 0.5pt}_{\vrule\kern2pt\parbox[b][%
        4pt][b]{\the\restsubwidth}{%
                        \ensuremath{{}_{#2}}}}}
        }
\newenvironment{manualtheorem}[1]{%
  \manualtheoreminner
}{\endmanualtheoreminner}
\title{Auction learning as a two-player game}
\author{Jad Rahme\thanks{ Corresponding author}\, , \quad Samy Jelassi, \quad S. Matthew Weinberg  \\
Princeton University\\
Princeton, NJ 08540, USA \\
\texttt{\{jrahme, sjelassi, smweinberg \}@princeton.edu} \\
}
\begin{document}

\maketitle

\begin{abstract}
  
Designing an incentive compatible auction that maximizes expected revenue is a central problem in Auction Design. While theoretical approaches to the problem have hit some limits, a recent research direction initiated by \cite{dutting2017optimal} consists in building neural network architectures to find optimal auctions. We propose two conceptual deviations from their approach which result in enhanced performance. First, we use recent results in theoretical auction design to introduce a \emph{time-independent} Lagrangian. This not only circumvents the need for an expensive hyper-parameter search (as in prior work), but also provides a single metric to compare the performance of two auctions (absent from prior work). Second,the optimization procedure in previous work uses an inner maximization loop to compute optimal misreports.  We amortize this process through the introduction of an additional neural network. We demonstrate the effectiveness of our approach by learning competitive or strictly improved auctions compared to prior work. Both results together further imply a novel formulation of Auction Design as a two-player game with stationary utility functions.

\end{abstract}

\vspace{-1.1em}
\section{Introduction}
\vspace{-0.6em}
Efficiently designing truthful auctions is a core problem in Mathematical Economics. Concrete examples include the sponsored search auctions conducted by companies as Google or auctions run on platforms as eBay. Following seminal work of Vickrey~\citep{vickrey1961counterspeculation} and Myerson~\citep{myerson1981optimal}, auctions are typically studied in the \emph{independent private valuations} model: each bidder has a valuation function over items, and their payoff depends only on the items they receive. Moreover, the auctioneer knows aggregate information about the population that each bidder comes from, modeled as a distribution over valuation functions, but does not know precisely each bidder's valuation (outside of any information in this Bayesian prior). A major difficulty in designing auctions is that valuations are private and bidders need to be incentivized to report their valuations truthfully.  The goal of the auctioneer is to design an incentive compatible auction which maximizes expected revenue.

\vspace{-0.2em}
Auction Design has existed as a rigorous mathematical field for several decades and yet, complete characterizations of the optimal auction only exist for a few settings. While Myerson's Nobel prize-winning work provides a clean characterization of the single-item optimum \citep{myerson1981optimal}, optimal \emph{multi-item} auctions provably suffer from numerous formal measures of intractability (including computational intractability, high description complexity, non-monotonicity, and others)~\citep{daskalakis2014complexity, ChenDPSY14, ChenDOPSY15, ChenMPY18, HartR15, Thanassoulis04}.

\vspace{-0.2em}
 An orthogonal line of work instead develops deep learning architectures to find the optimal auction.  \cite{dutting2017optimal} initiated this direction by proposing RegretNet, a feed-forward architecture.  They frame the auction design problem as a constrained learning problem and lift the constraints into the objective via the augmented Lagrangian method. Training RegretNet involves optimizing this Lagrangian-penalized objective, while simultaneously updating network parameters and the Lagrangian multipliers themselves. This architecture produces impressive results: recovering near-optimal auctions in several known multi-item settings, and discovering new mechanisms when a theoretical optimum is unknown. %However, the architecture is difficult to train in practice for multiple reasons.
 
 % Non-stationary function optimization  is known to be harder than stationary one [reference].
 
%  Yet, this approach presents several limitations. On the conceptual front, this approach leverages \emph{none} of the vast literature on Auction Theory (other than the optimal auction can be written as a linear program). Our work, on the other hand, leverages an exciting line of recent works~\cite{HartlineL10, HartlineKM11,BeiH11,daskalakis2012symmetries, rubinstein2018simple, DughmiHKN17, CaiOVZ19} 
\vspace{-0.2em}
  Yet, this approach presents several limitations. On the conceptual front, our main insight is a connection to an exciting line of recent works~\citep{HartlineL10, HartlineKM11,BeiH11,daskalakis2012symmetries, rubinstein2018simple, DughmiHKN17, CaiOVZ19} 
 on $\varepsilon$-truthful-to-truthful reductions.\footnote{By $\epsilon$-truthful, we mean the expected total regret $R$ is bounded by  $\epsilon$. See \autoref{prop:main} for a definition of $R$. } On the technical front, we identify three areas for improvement. First, their architecture is difficult to train in practice as the objective is non-stationary. Specifically, the Lagrangian multipliers are time-dependent and they increase following a pre-defined schedule, which requires careful hyperparameter tuning (see~\autoref{subsec:dutt} for experiments illustrating this). Leveraging the aforementioned works in Auction Theory, we propose a \emph{stationary} Lagrangian objective. Second, all prior work inevitably finds auctions which are not \emph{precisely} incentive compatible, and does not provide a metric to compare, say, an auction with revenue $1.01$ which is $0.002$-truthful, or one with revenue $1$ which is $0.001$-truthful. We argue that our stationary Lagrangian objective serves as a good metric (and that the second auction of our short example is ``better'' for our metric). Finally, their training procedure requires an inner-loop optimization (essentially, this inner loop is the bidders trying to maximize utility in the current auction), which is itself computationally expensive. We use amortized optimization to make this process more efficient.

 \section*{Contributions}

This paper leverages recent work in Auction Theory to formulate the learning of revenue-optimal auctions as a two-player game. We develop a new algorithm ALGnet (Auction Learning Game network) that produces competitive or better results compared to \cite{dutting2017optimal}'s RegretNet. In addition to the conceptual contributions, our approach yields the following improvements (as RegretNet is already learning near-optimal auctions, our improvement over RegretNet is not due to significantly higher optimal revenues).

\begin{itemize}
    \item[--] \textit{Easier hyper-parameter tuning}: By constructing a time-independent loss function, we circumvent the need to search for an adequate parameter scheduling.  Our formulation also involves less hyperparameters, which makes it more robust.

    \item[--] \textit{A metric to compare auctions}: We propose a metric to compare the quality of two auctions which are not incentive compatible.

    \item[--] \textit{More efficient training}: We replace the inner-loop optimization of prior work with a neural network, which makes training more efficient. 

    \item[--] \textit{Online auctions}: Since the learning formulation is time-invariant, ALGnet is able to quickly adapt in auctions where the bidders' valuation distributions varies over time. Such setting appears for instance in the online posted pricing problem studied in \citet{bubeck2017online}. 
\end{itemize}

 Furthermore, these technical contributions together now imply a novel formulation of auction learning as a two-player game (not zero-sum) between an auctioneer and a misreporter. The auctioneer is trying to design an incentive compatible auction that maximizes revenue while the misreporter is trying to identify breaches in the truthfulness of these auctions. 

The paper decomposes as follows. Section~\ref{sec:setting} introduces the standard notions of auction design. Section~\ref{sec:dutt} presents our game formulation for auction learning. Section~\ref{sec:NN_arch} provides a description of ALGnet and its training procedure. Finally, Section~\ref{sec:exps} presents numerical evidence for the effectiveness of our approach.

\section*{Related work}

\paragraph{Auction design and machine learning.} Machine learning and computational learning theory have been used in several ways to design auctions from samples of bidder valuations. Machine learning has been used to analyze the sample complexity of designing optimal revenue-maximizing auctions. This includes the framework of single-parameter settings\citep{morgenstern2015pseudo,huang2018making, HartlineT19,RoughgardenS16, GonczarowskiN17, GuoHZ19},  multi-item auctions \citep{dughmi2014sampling, GonczarowskiW18}, combinatorial auctions \citep{balcan2016sample,morgenstern2016learning,syrgkanis2017sample} 
and allocation mechanisms \citep{narasimhan2016general}. Other works have leveraged machine learning to optimize different aspects of mechanisms \citep{lahaie2011kernel,dutting2015payment}. Our approach is different as we build a deep learning architecture for auction design. 

\paragraph{Auction design and deep learning.} While \cite{dutting2017optimal} is the first paper to design auctions through deep learning, several other paper followed-up this work. \cite{feng2018deep} extended it to budget constrained bidders, \cite{golowich2018deep} to the facility location problem. \cite{tacchetti2019neural} built architectures based on the Vickrey-
Clarke-Groves mechanism. \cite{rahme2020permeq} used permutation-equivariant networks to design symmetric auctions.  \cite{shen2019automated} and \cite{dutting2017optimal} proposed architectures that \textit{exactly} satisfy incentive compatibility but are specific to \textit{single-bidder} settings. While all the previously mentioned papers consider a non-stationary objective function, we formulate a time-invariant objective that is easier to train and that makes comparisons between mechanisms possible.

\section{Auction design as a time-varying learning problem}\label{sec:setting}

We first review the framework of auction design and the problem of finding truthful mechanisms. We then recall the learning problem proposed by \cite{dutting2017optimal} to find optimal auctions. 

\subsection{Auction design and linear program}

\paragraph{Auction design.} We consider an auction with $n$ bidders and $m$ items. We will denote by $N=\{1,\dots,n\}$ and $M=\{1,\dots,m\}$ the set of bidders and items. Each bidder $i$ values item $j$ at a valuation denoted $v_{ij}$. We will focus on \emph{additive} auctions. These are auctions where the value of a set $S$ of items is equal to the sum of the values of the elements in that set at $\sum_{j \in S} v_{ij}$. Additive auctions are perhaps the most well-studied setting in multi-item auction design~\citep{HartN12, LiY13,daskalakis2014complexity, CaiDW16,daskalakis2017strong}.

The auctioneer does not know the exact valuation profile $V = (v_{ij})_{i\in N, j\in M}$ of the bidders in advance but he does know the distribution from which they are drawn: the valuation vector of bidder $i$, $\vec{v}_i=(v_{i1}, \dots, v_{im})$ is drawn from a distribution $D_i$ over $\mathbb{R}^m$. We will further assume that all bidders are independent and that $D_1 = \cdots = D_n $. As a result V is drawn from $D:= \otimes_{i=1}^n D_i = D_1^{\otimes^n}$.

\begin{definition}
 An auction is defined by a randomized allocation rule $g=(g_1,\dots,g_n)$ and a payment rule $p=(p_1,\dots,p_n)$ where $g_i\colon \mathbb{R}^{n\times m}\rightarrow [0,1]^{m}$ and $p_i\colon \mathbb{R}^{n\times m}\rightarrow \mathbb{R}_{\geq 0}$. Additionally for all items $j$ and valuation profiles $V$, the $g_i$ must satisfy $\sum_i [g_i(V)]_j\leq 1$.
 \end{definition}
 
Given a bid matrix $B=(b_{ij})_{i\in N, j\in M}$, $[g_i(B)]_j$
 is the probability that bidder $i$ receives object $j$ and $p_i(B)$ is the price bidder $i$ has to pay to the auction. The condition $\sum_i [g_i(V)]_j\leq 1$ allows the possibility for an item to be not allocated. 
 
 \begin{definition}
 The utility of bidder $i$ is defined by $u_i(\vec{v}_i,B)= \sum_{j=1}^m [g_i(B)]_j v_{ij} -p_i(B).$ 
\end{definition}

Bidders seek to maximize their utility and may report bids that are different from their true valuations.
In the following, we will denote by $B_{-i}$ the $(n-1)\times m$ bid matrix without bidder $i$, and by $(\vec{b}_i', B_{-i})$ the $n\times m$ bid matrix that inserts $\vec{b}_i'$ into row $i$ of $B_{-i}$ (for example: $B:= (\vec{b}_i, B_{-i})$.
We aim at auctions that incentivize bidders to bid their true valuations.

\begin{definition}\label{def:DSIC}
An auction $(g,p)$ is \textit{dominant strategy incentive compatible} (DSIC) if each bidder's utility is maximized by reporting truthfully no matter what the other bidders report. For every bidder $i,$ valuation $\vec{v}_i \in D_i$, bid $\vec{b}_i\hspace{.02cm}'\in D_i$ and bids $B_{-i}\in D_{-i}$,  $u_i(\vec{v}_i,(\vec{v}_i,B_{-i}))\geq u_i(\vec{v}_i,(\vec{b}_i\hspace{.02cm}',B_{-i})). $ 
\end{definition}
 
\begin{definition}\label{def:IR}
An auction is \textit{individually rational} (IR) if for all $i\in N, \; \vec{v}_i\in D_i$ and $B_{-i}\in D_{-i},$
\begin{equation}\label{eq:IR_eq}\tag{IR}
   u_i(\vec{v}_i,(\vec{v}_i,B_{-i}))\geq 0.  
\end{equation}
\end{definition}

In a DSIC auction, the bidders have the incentive to truthfully report their valuations and therefore, the revenue on valuation profile $V$ is $ \sum_{i=1}^n p_i(V).$ Optimal auction design aims at finding a DSIC and IR auction that maximizes the expected revenue $rev:= \mathbb{E}_{V\sim D}[\sum_{i=1}^n p_i(V)]$.  Since there is no known characterization of DSIC mechanisms in the multi-item setting, we resort to the relaxed notion of \textit{ex-post regret}. It measures the extent to which an auction violates DSIC.

\begin{definition}
The ex-post regret for a bidder $i$ is the maximum increase in his utility when considering all his possible bids and fixing the bids of others. For a valuation profile $V$, it is given by $r_{i}(V)=\max_{\vec{b}_i\hspace{.02cm}'\in \mathbb{R}^m} u_i(\vec{v}_i,(\vec{b}_i\hspace{.02cm}',V_{-i})) - u_i(\vec{v}_i,(\vec{v}_i,V_{-i}))$.  In particular, DSIC is equivalent to 
\begin{equation}\label{eq:regretDSIC}\tag{IC} 
    r_{i}(V)=0, \;  \forall i \in N,  \forall \, V \in D.
\end{equation}
\vspace{-0.2em}
The bid $\vec{b}_i'$ that achieves $r_i(V)$ is called the optimal misreport of bidder $i$ for valuation profile $V$.
\end{definition}
\vspace{-.2cm}
Therefore, finding an optimal auction is equivalent to the following linear program:
\vspace{-.2cm}
\begin{equation*}\tag{LP}\label{eq:exact_prob}
\begin{aligned}
\hspace{-.51cm} \vspace{-1.81cm} \underset{(g,p)\in \mathcal{M} }{\text{min}}
  - \mathbb{E}_{V\sim D}\left[\sum_{i=1}^n p_i(V)\right] \quad \text{s.t.} \quad 
&   r_{i}(V)=0, \hspace{1.55cm} \forall ~i \in N,\;  \forall ~V \in D, \\[-3.5\jot]
& u_i(\vec{v}_i,(\vec{v}_i,B_{-i}))\geq 0, \,\, \forall i\in N,\; \vec{v}_i\in D_i,B_{-i}\in D_{-i}.
\end{aligned}
\end{equation*}

\subsection{Auction design as a learning problem}\label{par:auction}
As the space of auctions $\mathcal{M}$ may be large, we will set a parametric model. In what follows, we consider the class of  auctions $(g^w,p^w)$ encoded by a neural network of parameter $w\in \mathbb{R}^d$.  The corresponding utility and regret function will be denoted by $u_i^w$ and $r_i^w$.

Following \cite{dutting2017optimal}, the formulation  ~\eqref{eq:exact_prob} is relaxed: the IC constraint for all $V\in D$ is replaced by the expected constraint $\mathbb{E}_{V\sim D}[r_i^w(V)]=0$ for all $i\in N.$ The justification for this relaxation can be found in \cite{dutting2017optimal}. By replacing expectations with empirical averages, the learning problem becomes:

 \begin{equation}\label{eq:empirical_prob}\tag{$\widehat{\mathrm{LP}}$}
\begin{aligned}
\underset{w\in \mathbb{R}^d}{\text{min}}
-\frac{1}{L}\sum_{\ell=1}^L \sum_{i=1}^n p_i^w(V^{(\ell)}) \quad \text{s.t.}\quad\widehat{r}_i^w:=\frac{1}{L}\sum_{\ell=1}^L r_i^w(V^{(\ell)})=0,\; \forall i \in N. 
\end{aligned}
\end{equation}
 
The learning problem \eqref{eq:empirical_prob} does not ensure \eqref{eq:IR_eq}. However, this constraint is usually built into the parametrization (architecture) of the model: by design, the only auction mechanism considered satisfy \eqref{eq:IR_eq}.  

Implementation details can be found in  \cite{dutting2017optimal,rahme2020permeq} or in Sec~\ref{sec:NN_arch}.

\section{Auction learning as a two-player game}\label{sec:dutt}

We first present the optimization and the training procedures for \eqref{eq:empirical_prob} proposed by \cite{dutting2017optimal}. We then demonstrate with numerical evidence that this approach presents two limitations: hyperparameter sensitivity and lack of interpretability. Using the concept of $\varepsilon$-truthful to truthful reductions, we construct a new loss function that circumvents these two aspects. Lastly, we resort to amortized optimization and reframe the auction learning problem as a two-player game.

 \subsection{The augmented Lagrangian method and its shortcomings}\label{subsec:dutt}

\paragraph{Optimization and training.} We briefly review the training procedure proposed by \cite{dutting2017optimal} to learn optimal auctions. The authors apply the augmented Lagrangian method to solve the constrained problem  \eqref{eq:empirical_prob} and consider the loss:\\ 
\vspace*{-.32cm}
\begin{equation*}
\begin{aligned}
    \mathcal{L}(w;\lambda; \rho)=-\frac{1}{L}\sum_{\ell=1}^L \sum_{i\in N} p_i^w(V^{(\ell)})+\sum_{i\in N}\lambda_i r_i^w(V^{(\ell)})+\frac{\rho}{2}\sum_{i \in N}\left(r_i^w(V^{(\ell)})\right)^2, 
\end{aligned}
\end{equation*}
where $\lambda \in\mathbb{R}^n$ is a vector of Lagrange multipliers and $\rho>0$ is a parameter controlling the weight of the quadratic penalty. More details about the training procedure can be found in Appendix A.

\paragraph{Scheduling consistency problem.} The parameters $\lambda$ and $\rho$ are time-varying. Indeed, their value changes according to a pre-defined scheduling of the following form: 1) Initialize $\lambda$ and $\rho$ with respectively   $\lambda^0$ and $\rho^0$,
2)  Update $\rho$ every $T_{\rho}$ iterations : $\rho^{t+1} \leftarrow \rho^{t} + c ,$  where $c$ is a pre-defined constant,
3) Update $\lambda$ every $T_{\lambda}$ iterations according to $\lambda_i^t \leftarrow \lambda_i^t + \rho^t \, \widehat{r}_i^{w^{t}} $.

Therefore, this scheduling requires to set up five hyper parameters $(\lambda^0,\rho^0,c,T_{\lambda},T_{\rho})$. Some of the experiments found \cite{dutting2017optimal} were about learning an optimal mechanism for an $n $-bidder $m$-item auction ($n \times m$) where the valuations are iid $\mathcal{U}[0,1]$. Different scheduling parameters were used for different values of $n$ and $m$.  We report the values of the hyper parameters used for the $1 \times 2$, $3 \times 10$ and $5 \times 10$ settings in \autoref{tab:scheduling}(a). A natural question is whether the choice of parameters heavily affects the performance. We proceed to a numerical investigation of this questions by trying different schedulings (columns) for different settings (rows) and report our the results in \autoref{tab:scheduling}(b).

\begin{table}[H]
\caption{(a): Scheduling parameters values set in \cite{dutting2017optimal} to reach optimal auctions in $n\times m$ settings with $n$ bidders, $m$ objects and i.i.d. valuations sampled from $\mathcal{U}[0,1].$ (b): Revenue $\mathit{rev}:=\mathbb{E}_{V\sim D}[\sum_{i=1}^n p_i(V)]$ and average regret per bidder $\mathit{reg}:= \nicefrac{1}{n} \, \mathbb{E}_{V \in D}\left[\sum_{i=1}^n r_i(V)\right] $ for $n\times m$ settings when using the different parameters values set reported in (a). }\label{tab:scheduling}  
\begin{minipage}{0.3\textwidth}
{\hspace{-.0cm}\resizebox{0.95\columnwidth}{!}{ 
\begin{tabular}{lccc}
\toprule
\multirow{1}{*}{ } & 
\multicolumn{1}{c}{ $1 \times 2$} & \multicolumn{1}{c}{ $3 \times 10$} & \multicolumn{1}{c}{ $5 \times 10$}
\\
\midrule
$\lambda^0$ & 5 & 5 & 1 \\
% \midrule
$\rho^0 $ & 1 & 1 & 0.25 \\
% \midrule
c & 50 & 1 & 0.25 \\
% \midrule
$T_{\lambda}$ & $10^2$ & $10^2$ & $10^2$ \\
% \midrule
$T_{\rho}$ & $10^4$ & $10^4$ & $10^5$ \\
\bottomrule
\end{tabular}}}
            \caption*{(a) }%Scheduling Parameters.
\end{minipage}
%\hfillx
\begin{minipage}{0.7\textwidth}

{\resizebox{0.95\columnwidth}{!}{ \begin{tabular}{lccccccccc}
\toprule
% \multirow{2}{*}{\diagbox[width=7.2em]{Setting}{~~Schedule}} 
\multicolumn{1}{c}{ }
& \multicolumn{6}{c}{ Scheduling} \\
\cmidrule(lr){2-7}
\multicolumn{1}{c}{}
& \multicolumn{2}{c}{ $1 \times 2$} & \multicolumn{2}{c}{ $3 \times 10$} & \multicolumn{2}{c}{ $5 \times 10$}
\\
% \cline{2-7}
    % \midrule
 \cmidrule(r){2-3}
  \cmidrule(r){4-5}
  \cmidrule(r){6-7}
Setting & $\mathit{rev}$ & $\mathit{rgt}$ & $\mathit{rev}$ & $\mathit{rgt}$ & $\mathit{rev}$ & $\mathit{rgt}$
\\
\cmidrule(lr){1-1}
\cmidrule(lr){2-2}
\cmidrule(lr){3-3}
\cmidrule(lr){4-4}
\cmidrule(lr){5-5}
\cmidrule(lr){6-6}
\cmidrule(lr){7-7}

% \hline
 $1 \times 2$
& 0.552 & 
 0.0001 & 0.573 & 0.0012 & 0.332 & 0.0179
\\
% \hline

 $3 \times 10$  
& 4.825 & 0.0007 & 5.527 & 0.0017 & 5.880 & 0.0047 
\\
% \hline

 $5 \times 10$ 
& 4.768 & 0.0006 & 5.424 & 0.0033 & 6.749 & 0.0047 
\\
\bottomrule

\end{tabular}}}

\caption*{(b)}%Revenue and Regret for different Schedules
        \end{minipage}
\end{table}

The auction returned by the network dramatically varies with the choice of scheduling parameters. When applying the  parameters of $1\times 2$ to $5\times 10$, we obtain a revenue that is lower by 30\%! The performance of the learning algorithm strongly depends on the specific values of the hyperparameters. Finding an adequate scheduling requires an extensive and time consuming hyperparameter search.

\paragraph{Lack of interpretability.} 

How should one compare two mechanisms with different expected revenue and regret? Is a mechanism $M_1$ with revenue $P_1 = 1.01$ and an average total regret $R_1 = 0.02$ better than a mechanism $M_2$ with $P_2 = 1.0$ and $R_2 = 0.01$ ?  The approach in \cite{dutting2017optimal} cannot answer this question. To see that, notice that when $\lambda_1 = \cdots = \lambda_n =\lambda$  we can rewrite ${\mathcal{L}(w;\lambda; \rho) = - P + \lambda R + \frac{\rho}{2} R^2}$. Which mechanism is better depends on the values of $\lambda$ and $\rho$. For example if $\rho=1$ and $\lambda =0.1$ we find that $M_1$ is better, but if $\rho=1$ and $\lambda =10$ then $M_2$ is better. Since the values of $\lambda$ and $\rho$ change with time, the Lagrangian approach in \cite{dutting2017optimal} cannot provide metric to compare two mechanisms.

\subsection{A time-independent and interpretable loss function for auction learning}\label{sec:new_loss}
\vspace{-0.5em}
Our first contribution consists in introducing a new loss function for auction learning that addresses the two first limitations of \cite{dutting2017optimal} mentioned in Section~ \ref{subsec:dutt}. We first motivate this loss in the one bidder case and then extend it to auctions with many bidders.

\subsubsection{Mechanisms with one bidder}
\label{subsection:onebidder}

\begin{proposition}
\label{prop:main}[\cite{BalcanBHY05}, attributed to Nisan]
Let $\mathcal{M}$ be an additive auction with $1$~bidder and $m$~items. Let $P$ and $R$  denote the expected revenue and regret, $P  = \mathbb{E}_{V \in D}\left[ p(V) \right]$ and $R = \mathbb{E}_{V \in D}\left[ r(V)\right]$.
There exists a mechanism $\mathcal{M}^{*}$ with expected revenue $P^{*} = (\sqrt{P }-\sqrt{R })^2$ and zero regret $R^{*} = 0$.
\end{proposition}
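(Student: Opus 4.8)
The plan is to derive \autoref{prop:main} from the $\varepsilon$-truthful-to-truthful reduction of \cite{rubinstein2018simple} (\autoref{lem:multibidder}), treating that reduction as a black box that converts $\mathcal{M}$ into a family of \emph{exactly} truthful mechanisms $\{M_\delta\}$ indexed by a tunable slack parameter $\delta$, and then optimizing over $\delta$. The algebraic identity guiding everything is $(\sqrt P - \sqrt R)^2 = \bigl(1-\sqrt{R/P}\,\bigr)^2 P$, which tells me that the target revenue is exactly what a construction losing a multiplicative factor $(1-\delta)^2$ produces when evaluated at the special value $\delta=\sqrt{R/P}$. So the whole argument should reduce to (i) exhibiting such a family and (ii) performing a one-line optimization.

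First I would set up the construction behind $M_\delta$. Starting from $\mathcal{M}=(g,p)$ with total revenue $P$ and total regret $R$, the idea is to dampen the mechanism and spend part of the surplus to \emph{buy} truthfulness: run a scaled version of $\mathcal{M}$ while granting each bidder a report-dependent discount calibrated so that the marginal gain from any misreport is cancelled. There are naturally two multiplicative $(1-\delta)$ losses in revenue — one from scaling participation, one from the discount itself — which is what produces the square $(1-\delta)^2$. Because the most a bidder can extract by deviating is governed by $r_i(V)$, a discount budget of order $\delta P$ neutralizes all misreports precisely once the budget dominates the regret, i.e. in the aggregate $\delta^2 P \gtrsim R$, equivalently $\delta \geq \sqrt{R/P}$. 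The surviving revenue is then $(1-\delta)^2 P$, and $r_i^{*}(V)=0$ for every $i$ and almost every $V$, so $R^{*}=0$. Establishing these two facts — genuine truthfulness of the dampened, discounted mechanism and the revenue lower bound $(1-\delta)^2 P$ — is the content of \autoref{lem:multibidder}, and is the step I expect to be the crux.

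With the family $\{M_\delta\}$ in hand the rest is immediate: choose $\delta^{\star}=\sqrt{R/P}$, legitimate in the interesting regime $R\leq P$ (when $R>P$ the null mechanism is already truthful and attains the nonnegative target), which yields expected revenue $\bigl(1-\sqrt{R/P}\,\bigr)^2 P = (\sqrt P - \sqrt R)^2$ together with zero regret, as claimed.

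The main obstacle lies entirely in the reduction, not the optimization, and two points need care. First, one must verify that each bidder's misreport incentive can be cancelled by a discount whose \emph{expected} cost is only $O(\delta P)$ while still delivering pointwise (dominant-strategy) truthfulness, even though the hypothesis bounds only the \emph{expected} regret $R$; this mismatch between an expectation bound and a pointwise guarantee is exactly what forces the square-root trade-off rather than a linear one. Second, in the multi-bidder setting the discounts for different bidders must be simultaneously compatible with the allocation constraint $\sum_i [g_i(V)]_j \leq 1$, so the accounting has to be carried out on the aggregates $P$ and $R$ rather than bidder by bidder — which is precisely why \autoref{prop:main} is phrased in terms of the total revenue and total regret.
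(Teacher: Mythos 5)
Your proposal is correct and follows essentially the same route as the paper: invoke the Rubinstein--Weinberg reduction (\autoref{lem:multibidder}) as a black box producing, for each slack $\epsilon>0$, an exactly truthful mechanism, and then optimize the slack to $\epsilon^{\star}=\sqrt{R/P}$ to obtain $(\sqrt{P}-\sqrt{R})^2$. The only discrepancy is in your gloss of the reduction's guarantee --- the lemma gives revenue $(1-\epsilon)\bigl(P-\tfrac{R}{\epsilon}\bigr)$ with zero regret for \emph{every} $\epsilon>0$ (not $(1-\delta)^2P$ subject to $\delta\geq\sqrt{R/P}$) --- but the two expressions coincide at the chosen $\epsilon^{\star}$, so your final computation matches the paper's.
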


A proof of this proposition can be found in Appendix C. Comparing two mechanisms is straightforward when both of them have zero-regret: the best one achieves the highest revenue. \autoref{prop:main} allows a natural and simple extension of this criteria for non zero-regret mechanism with one bidder: { we will say that $M_1$ is better than $M_2$ if and only if $M_1^{*}$ is better than $M_2^{*}$:}

\begin{equation}
    M_1 \geq M_2 \iff P^{*}(M_1) \geq P^{*}(M_2) \iff \sqrt{P_1}-\sqrt{R_1}  \geq \sqrt{P_2}-\sqrt{R_2}
\end{equation}
{ Using our metric, we find that a one bidder mechanism with revenue of $1.00$ and regret of $0.01$ is ''better'' than one with revenue $1.01$ and regret $0.02$.}

\subsubsection{Mechanisms with multiple bidders}

Let $M_1$ and $M_2$ be two mechanisms with $n$ bidders and $m$ objects. Let $P_i$ and $R_i$  denote their total expected revenue and regret, $P_i  = \mathbb{E}_{V \in D}\left[ \sum_{j=1}^n p_j(V) \right]$ and $R_i = \mathbb{E}_{V \in D}\left[\sum_{j=1}^n r_j(V)\right]$. {We can extend  our metric}  derived in Section~\ref{subsection:onebidder} to the multiple bidder by the following: 

\begin{equation}
  M_1 \text{ is ''better'' than } M_2 \iff   M_1 \geq M_2 \iff  \sqrt{P_1}-\sqrt{R_1}  \geq \sqrt{P_2}-\sqrt{R_2}
\end{equation}
When $n=1$ we recover the criteria from Section~\ref{subsection:onebidder} that is backed by \autoref{prop:main}. When $n > 1$, it is considered a major open problem whether the extension of \autoref{prop:main} still holds. Note that a multi-bidder variant of \autoref{prop:main} \emph{does} hold under a different solution concept termed ``Bayesian Incentive Compatible''~\citep{rubinstein2018simple,CaiOVZ19}, supporting the conjecture that \autoref{prop:main} indeed extends.\footnote{An auction is \emph{Bayesian Incentive Compatible} if every bidder maximizes their expected utility by truthful reporting \emph{in expectation over the other bidders' truthful bids}. Compare this to Dominant Strategy Incentive Compatible (our paper), where every bidder maximizes their expected utility by truthful reporting \emph{for all realizations of the other bidders' bids.}} Independently of whether or not \autoref{prop:main} holds, this reasoning implies a candidate loss function for the multi-bidder setting which we can evaluate empirically.

This way of comparing mechanisms motivates the use of loss function: $\mathcal{L}(P,R) = - (\sqrt{P}-\sqrt{R})$ instead of the Lagrangian from Section~\ref{sec:dutt}, and indeed this loss function works well in practice. We empirically find the loss function $\mathcal{L}_{m}(P,R) = -(\sqrt{P}-\sqrt{R}) + R$ further accelerates training, as it further (slightly) biases towards mechanisms with low regret. Both of these loss function are time-independent and hyperparameter-free.

\subsection{Amortized misreport optimization}\label{sec:amort}

To compute the regret $r_i^w(V)$ one has to solve the optimization problem:  $\max_{\vec{v}_i\hspace{.02cm}'\in \mathbb{R}^m} u_i^w(\vec{v}_i,(\vec{v}_i\hspace{.02cm}',V_{-i})) - u_i^w(\vec{v}_i,(\vec{v}_i,V_{-i}))$.  In \cite{dutting2017optimal}, this optimization problem is solved with an inner optimization loop for each valuation profile. In other words, computing the regret of each valuation profile is solved separately and independently, from scratch. 

If two valuation profiles are very close to each other, one should expect that the resulting optimization problems to have close results. We leverage this to improve training efficiency.

We propose to amortize this inner loop optimization. Instead of solving all these optimization problems independently, we will instead learn one neural network $M^{\phi}$ that tries to predict the solution of all of them. $M^{\phi}$ takes as entry a valuation profile and maps it to the optimal misreport:
\begin{equation}
M^{\phi}:
\begin{cases}
\mathbb{R}^{n \times m} & \to \mathbb{R}^{n \times m}\\ 
V =\left[\vec{v_i}\right]_{i \in N} & \to \left[ \mbox{argmax}_{\vec{v}' \in D} u_i(\vec{v_i},(\vec{v}',V_{-i}))\right]_{i \in N}
\end{cases}
\end{equation}
The loss $\mathcal{L}_{r}$ that $M^{\phi}$ is trying to minimize follows naturally from that definition and is then given by: $
\mathcal{L}_{r}(\phi,w) = - \mathbb{E}_{V \in D} \left[\sum_{i=1}^n u^w_i(\vec{v_i},([M^{\phi}(V)]_i,V_{-i})) \right].$

\subsection{Auction learning as a two-player game}

  In this section, we combine the ideas from Sections~\ref{sec:new_loss} and \ref{sec:amort} to obtain a new formulation for the auction learning problem as a two-player game between an Auctioneer with parameter $w$ and a Misreporter with parameter $\varphi$. The optimal parameters for the auction learning problem $(w^{*},\varphi^{*})$ are a Nash Equilibrium for this game.
  
  The Auctioneer is trying to design a truthful \eqref{eq:regretDSIC} and rational \eqref{eq:IR_eq} auction that maximizes revenue. The Misreporter is trying to maximize the bidders' utility, for the current auction selected by Auctioneer, $w$. This is achieved by minimizing the loss function $\mathcal{L}_{r}(\phi,w)$ wrt to $\phi$ (as discussed in Sec~\ref{sec:amort}). The Auctioneer in turn maximizes expected revenue, for the current misreports as chosen by Misreporter. This is achieved by minimizing $\mathcal{L}_{m}(w,\varphi) = -(\sqrt{P^{w}}+\sqrt{R^{w,\varphi}}) + R^{w,\varphi}$  with respect to $w$ (as discussed in Sec~\ref{sec:new_loss}). Here, $R^{w,\varphi}$ is an estimate of the total regret that auctioneer computes for the current Misreporter $\varphi$, 
  $R^{w,\varphi} = \frac{1}{L}\sum_{\ell=1}^L \sum_{i\in N}\left(u_i^w(\vec{v}_i,([M^{\phi}(V)]_i ,V_{-i}))- u_i^w(\vec{v}_i,(\vec{v}_i,V_{-i}))\right).$ This game formulation can be summarized as follows:\\
\vspace{-0.2em}
\begin{minipage}[t]{0.40\textwidth}
\begin{align}
\label{eq:game}\tag{G} 
\begin{cases}
\hspace{-1em}
\begin{aligned}
  \text{Misreporter: } \begin{cases}
    \textbf{loss: } \mathcal{L}_r(\varphi,w) \\
    \textbf{parameter: }\varphi
    \end{cases}  
    %\min_{\varphi} \mathcal{L}_r(R(w,\varphi))
\\\\
    \quad \text{Auctioneer: } 
    %\min_{w} \mathcal{L}_m(P(w),R(w,\varphi))
    \begin{cases}
    \textbf{loss: } \mathcal{L}_m(w,\varphi) \\
    \textbf{parameter: } w 
    \end{cases}
\end{aligned}
\end{cases}
\end{align}
\end{minipage}
\hspace{2em}
\begin{minipage}[t]{0.55\textwidth}

  \centering\raisebox{\dimexpr \topskip-\height}{%
  \includegraphics[width=\textwidth]{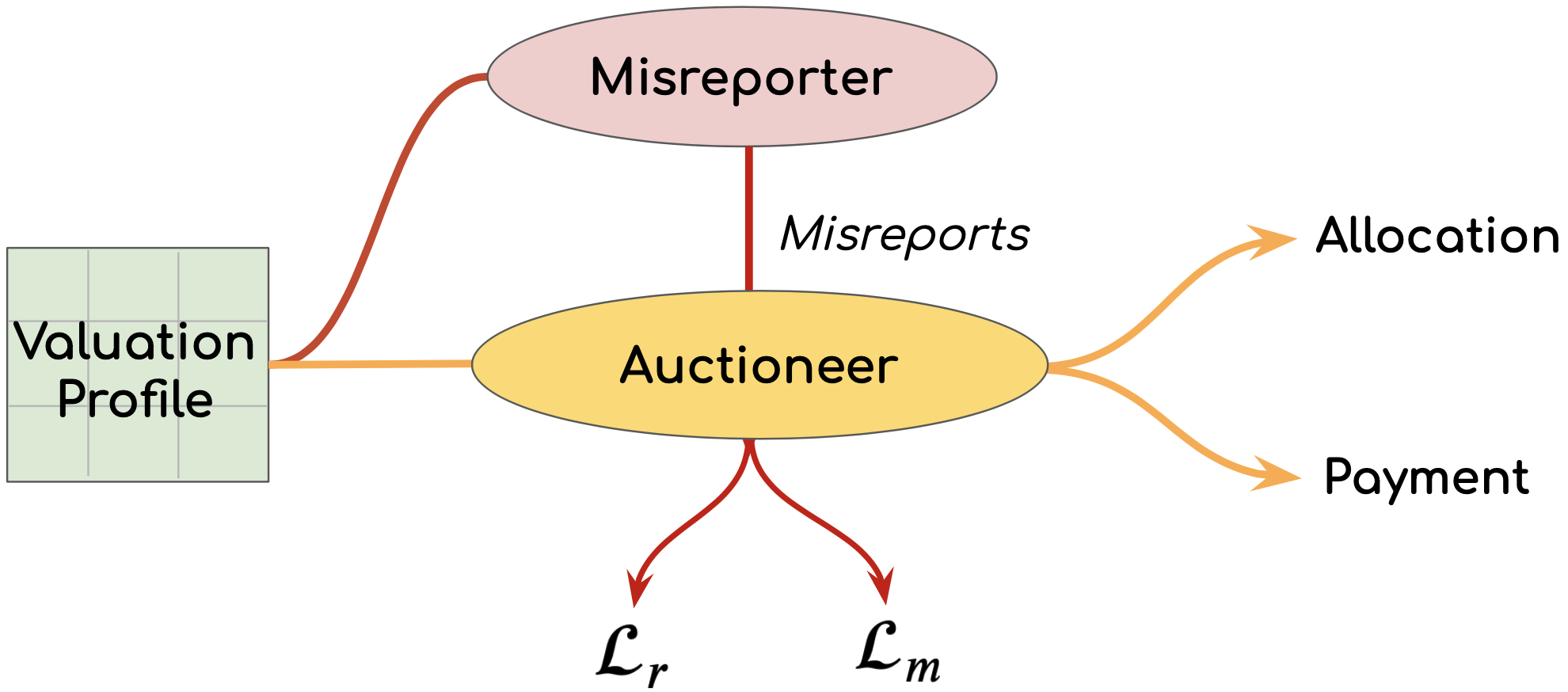}}
  \label{fig1}
\end{minipage}\hfill  \\
\begin{remark}
 The game formulation \eqref{eq:game} reminds us of Generative Adversarial Networks \citep{goodfellow2014generative}. Contrary to GANs, it is not a zero-sum game. 
\end{remark}

\vspace{-1em}
\section{Architecture and training procedure}\label{sec:NN_arch}
\vspace{-0.7em}
We describe ALGnet, a feed-forward architecture solving for the game formulation \eqref{eq:game} and then provide a training procedure. ALGnet consists in two modules that are the auctioneer's module and the misreporter's module. These components take as input a bid matrix $B = (b_{i,j})\in \mathbb{R}^{n\times m}$ and are trained jointly. Their outputs are used to compute the regret and revenue of the auction. 

\paragraph{Notation.} We use  $ \text{MLP}(d_{\text{in}},n_l,h,d_{\text{out}})$ to refer to  a fully-connected neural network  with input dimension $d_{\text{in}}$, output dimension $d_{\text{out}}$ and $n_l$ hidden layers of width $h$ and $\mathrm{tanh}$ activation function. $\mathrm{sig}$ denotes the sigmoid activation function. Given a matrix $B=[\vec{b}_1,\dots,\vec{b}_n]^{\top}\in \mathbb{R}^{n\times m},$ we define for a fixed $i\in N$, the matrix $B_{(i)}:=[\vec{b}_i,\vec{b}_1,\dots,\vec{b}_{i-1},\vec{b}_{i+1},\dots,\vec{b}_n].$ 
% Lastly, $\vec{B}\in \mathbb{R}^{nm}$ denotes the vectorization of the matrix $B$.

\subsection{The Auctioneer's module}

 It is composed of an allocation network that encodes a  randomized  allocation $g^w\colon \mathbb{R}^{nm} \to [0,1]^{nm}$ and a  payment network that encodes a payment rule $p^w\colon \mathbb{R}^{nm} \to \mathbb{R}^{n}$.

\paragraph{Allocation network.} It computes the allocation probabily of item $j$ to bidder $i$ $[g^w(B)]_{ij}$ as $[g^w(B)]_{ij} = [f_1(B)]_j\cdot [f_2(B)]_{ij}$ where $f_1\colon \mathbb{R}^{n\times m} \to [0,1]^{m}$ and $f_2\colon\mathbb{R}^{n\times m} \to [0,1]^{m \times n }$ are functions 
computed by two feed-forward neural networks.\\
\hspace*{.2cm} -- $[f_1(B)]_j$ is the probability that object $j \in M$ is allocated and is given by ${[f_1(B)]_j = \mathrm{sig}\left(\text{MLP}(nm,n_{a},h_a,n)\right)}$. \\
\hspace*{.3cm} --\; $[f_2(B)]_{ij}$ is the probability that item $j\in M$ is allocated to bidder $i\in N$ conditioned on object $j$ being allocated. A first MLP computes  $l_j:=\text{MLP}(nm,n_{a},h_a,m)(B_{(j)})$ for all $j\in M$. The network then concatenates all these vectors $l_j$ into a matrix $L\in \mathbb{R}^{n\times m}$. A softmax activation function is finally applied to $L$ to ensure feasibility i.e. for all $j\in M, \sum_{i\in N}L_{ij} = 1.$

\paragraph{Payment network.}

It computes the payment $[p^w(B)]_i$ for bidder $i$ as $[p^w(B)]_i = \tilde{p}_i \sum_{j=1}^m B_{ij}[g^w(B)]_{ij},$ where $\tilde{p}\colon \mathbb{R}^{n\times m}\rightarrow [0,1]^n.$ $\tilde{p}_i$ is the fraction of bidder’s $i$ utility that she has to pay to the mechanism. We  compute $ \tilde{p}_i = \mathrm{sig}\left(\text{MLP}(nm,n_{p},h_p,1)\right)(B_{(i)}).$  Finally, 
notice that by construction $[p^w(B)]_i \leq \sum_{j=1}^m B_{ij}g^w(B)_{ij}$ which ensures that \eqref{eq:IR_eq} is respected.

\subsection{The Misreporter's module}
The module consists in an $\text{MLP}(nm,n_{M},h_M,m)$ followed by a projection layer $\mathrm{Proj}$ that ensure that the output of the network is in the domain $D$ of the valuation.  For example when the valuations are restricted to $[0,1]$, we can take $\mathrm{Proj=sig},$ if they are non negative number,we can take $\mathrm{Proj=SoftPlus}$. The optimal misreport for bidder $i$ is then given by $\text{Proj} \circ \text{MLP}(nm,n_{M},h_M,m)(B_{(i)}) \in \mathbb{R}^m$. Stacking these vectors gives us the misreport matrix $M^{\phi}(B)$.

\subsection{Training procedure and optimization}
\begin{minipage}[t]{0.45\linewidth}
We optimize the game \eqref{eq:game} over the space of  neural networks parameters $(w,\varphi).$ The algorithm is easy to implement (\autoref{alg:gameAlgorithm}).

 At each time $t$, we sample a batch of valuation profiles of size $B$.  The algorithm performs $\tau$ updates for the Misreporter's network (line 9) and one update on the Auctioneer's network (line 10). Moreover, we often reinitialize the Misreporter's network every $T_{init}$ steps in the early phases of the training ($t\leq T_{limit}$). This step is not necessary but we found empirically that it speeds up training. 

 \end{minipage}
 \hspace*{.3cm}
\begin{minipage}[t]{.52\textwidth}

\vspace*{-.6cm}

\begin{algorithm}[H]
\caption{ALGnet training} 
\label{alg:gameAlgorithm}
\begin{algorithmic}[1]
%\State Initialization:
\State \textbf{Input}: number of agents, number of objects.
\State \textbf{Parameter}: $\gamma >0 ;\; B, T, T_{init},  T_{limit}, \tau \in \mathbb{N}.$
\State \textbf{Initialize} misreport's and auctioneer's nets.
%\State Main routine:
\For {$t=1,\dots,T$}
    \If{$t \equiv 0\; \mathrm{mod}\;  T_{init}$ and $t < T_{Limit}$}:\\
      \hspace{1cm}  Reinitialize Misreport Network
        \EndIf
\State Sample valuation batch $S$ of size $B$.
\For{$s=1,\dots,\tau$}
    \State \hspace{-.2cm}\mbox{ $\phi^{s+1}\leftarrow \phi^{s}-\gamma \nabla_\phi \mathcal{L}_{r}(\phi^{s},w^t)(S). $}
    \EndFor
\State \mbox{\hspace{-.1cm} $w^{t+1}\leftarrow w^t-\gamma \nabla_w \mathcal{L}_{m}(w^t,\phi)(S). $}
\EndFor
\end{algorithmic}
\end{algorithm}

\end{minipage}

\vspace{-4mm}

\section{Experimental Results}\label{sec:exps}

We show that ALGnet  can  recover  near-optimal  auctions for settings where  the  optimal  solution  is known and that it  can find new auctions for settings where analytical solution are not known. Since RegretNet is already capable of discovering near optimal auctions, one cannot expect ALGnet to achieve significantly higher optimal revenue than RegretNet. The results obtained are competitive or better than the ones obtained in \cite{dutting2017optimal} while requiring much less hyperparameters (Section~\ref{sec:dutt}). 

We also evaluate ALGnet in online auctions and compare it to RegretNet.

For each experiment, we compute the total revenue $\mathit{rev}:=\mathbb{E}_{V\sim D}[\sum_{i\in N} p_i^{w}(V)]$ and average regret $\mathit{rgt}:=\nicefrac{1}{n}\,\mathbb{E}_{V\sim D}[\sum_{i\in N} r_i^{w}(V)]$ on a test set of $10,000$ valuation profiles. We run each experiment 5 times with different random seeds and report the average and standard deviation of these runs. In our comparisons we make sure that ALGnet and RegretNet have similar sizes for  fairness (Appendix~D).

\subsection{Auctions with known and unknown optima}

\paragraph{Known settings. } We show that ALGnet is capable of recovering near optimal auction in different well-studied auctions that have an analytical solution. These are one bidder and two items auctions where the valuations of the two items $v_1$ and $v_2$ are independent. We consider the following settings.  (A): $v_1$ and $v_2$ are  i.i.d. from  $\mathcal{U}[0,1]$,  (B): $v_1 \sim \mathcal{U}[4,16]$ and $v_2\sim \mathcal{U}[4,7]$, (C): $v_1$ has density $f_1(x)=5/(1+x)^6$ and $v_2$ has density $f_2(y)=6/(1+y)^7.$

(A) is the celebrated Manelli-Vincent auction \citep{manelli2006bundling}; (B) is a non-i.i.d. auction and (C) is a non-i.i.d. heavy-tail auction and both of them are studied in \citet{daskalakis2017strong}. We compare our results to the theoretical optimal auction (\autoref{fig:table_small_auct}). (\citet{dutting2017optimal} does not evaluate RegretNet on settings (B) \& (C)).   During the training process, $\mathit{reg}$ decreases to 0 while  $\mathit{rev}$ and $P^{*}$ converge to the optimal revenue.  For (A), we also plot $\mathit{rev}$, $\mathit{rgt}$ and $P^{*}$ as function of the number of epochs and we compare it to RegretNet (\autoref{fig:comparaisonRegretNet}).

\begin{minipage}[h]{0.36\linewidth}

 Contrary to ALGnet, we observe that RegretNet overestimates the revenue in the early stages of training at the expense of a higher regret. As a consequence, ALGnet learns the optimal auction faster than RegretNet while being schedule-free and requiring less hyperparameters.
\end{minipage} 
\hspace*{0.4cm}
\begin{minipage}[h]{0.600\linewidth}
\vspace*{-1.2em}
\begin{table}[H]
\caption{Revenue \& regret of ALGnet for settings (A)-(C). }
\vspace*{-.5em}
\label{fig:table_small_auct}
    \begin{tabular}{lccccccc}
     \toprule
    \multirow{2}{*}{} & 
    \multicolumn{2}{c}{Optimal} & \multicolumn{2}{c}{ALGnet (Ours)}
    \\
    \cmidrule(lr){2-3}
    \cmidrule(lr){4-5}
    & $\mathit{rev}$ & $\mathit{rgt}$ & $\mathit{rev}$ & $\mathit{rgt ~~(\times 10^{-3})}$
    \\
    \cmidrule(lr){2-2}
    \cmidrule(lr){3-3}
    \cmidrule(lr){4-4}
    \cmidrule(lr){5-5}
    (A)
    & $0.550$ & 
    $0$ & 0.555 $(\pm 0.0019)$ & $0.55~(\pm 0.14)$ 
    \\
    % \hline
    (B)
    & 9.781 & 0  & 9.737 $(\pm 0.0443)$ & $0.75~(\pm 0.17)$  
    \\
    % \hline
    
    (C)
    & 0.1706 & $ 0 $& 0.1712 $(\pm 0.0012)$ & $ 0.14~(\pm 0.07)$
    \\ 
    \hline
    \end{tabular}
\end{table}
\end{minipage} 

\vspace{0.2cm}

\begin{figure}[h]
\vspace{-1em}
\begin{subfigure}{0.33\textwidth}
  \centering
   \includegraphics[width=1.05\linewidth]{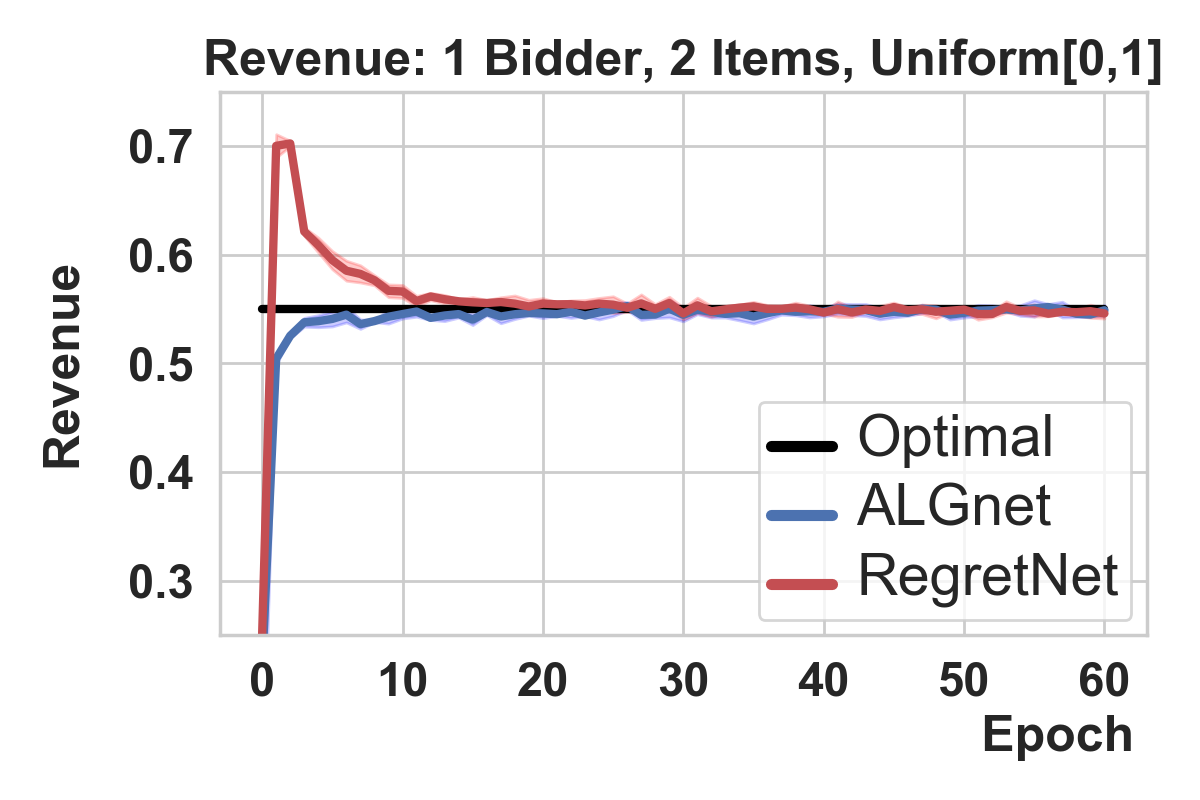}
   \vspace{-2.5em}
  \caption{}
  \label{fig:1}
\end{subfigure}
\begin{subfigure}{0.33\textwidth}
  \centering
  \includegraphics[width=1.05\linewidth]{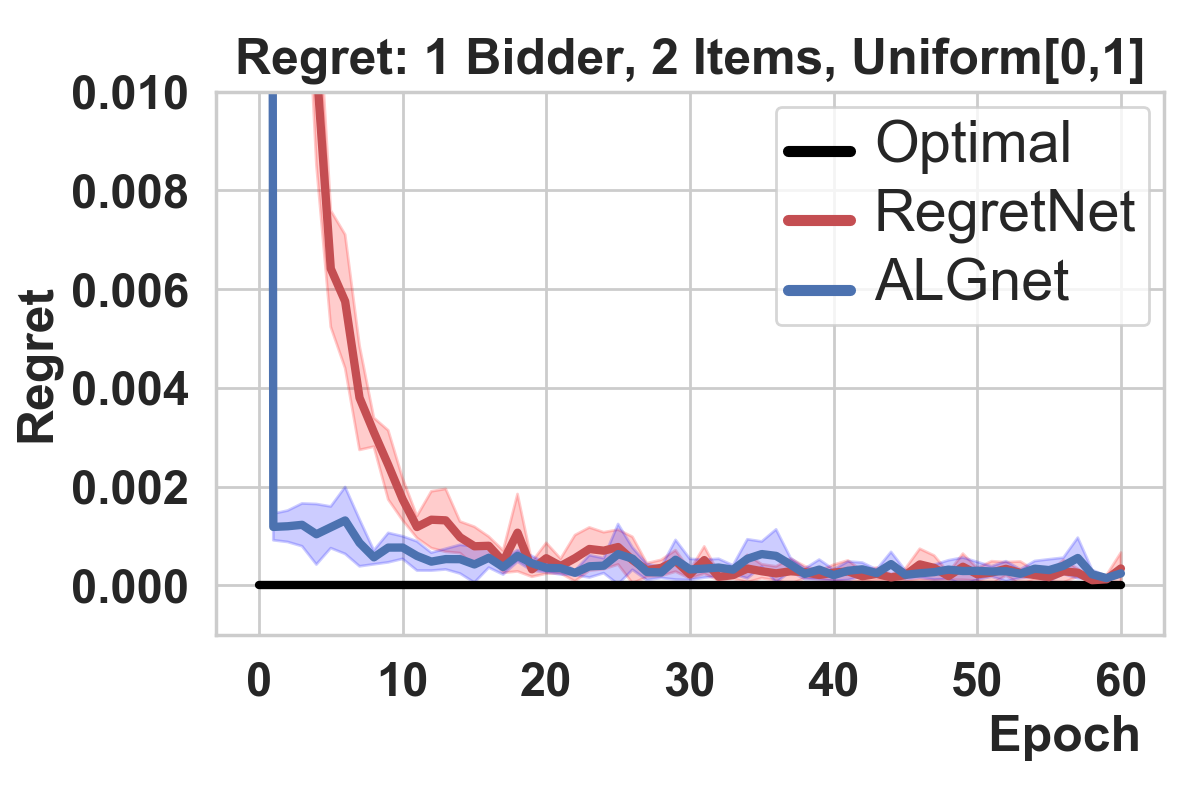}
  \vspace{-2.5em}
  \caption{}
  \label{fig:2}
\end{subfigure}
\begin{subfigure}{0.33\textwidth}
  \centering
  \includegraphics[width=1.05\linewidth]{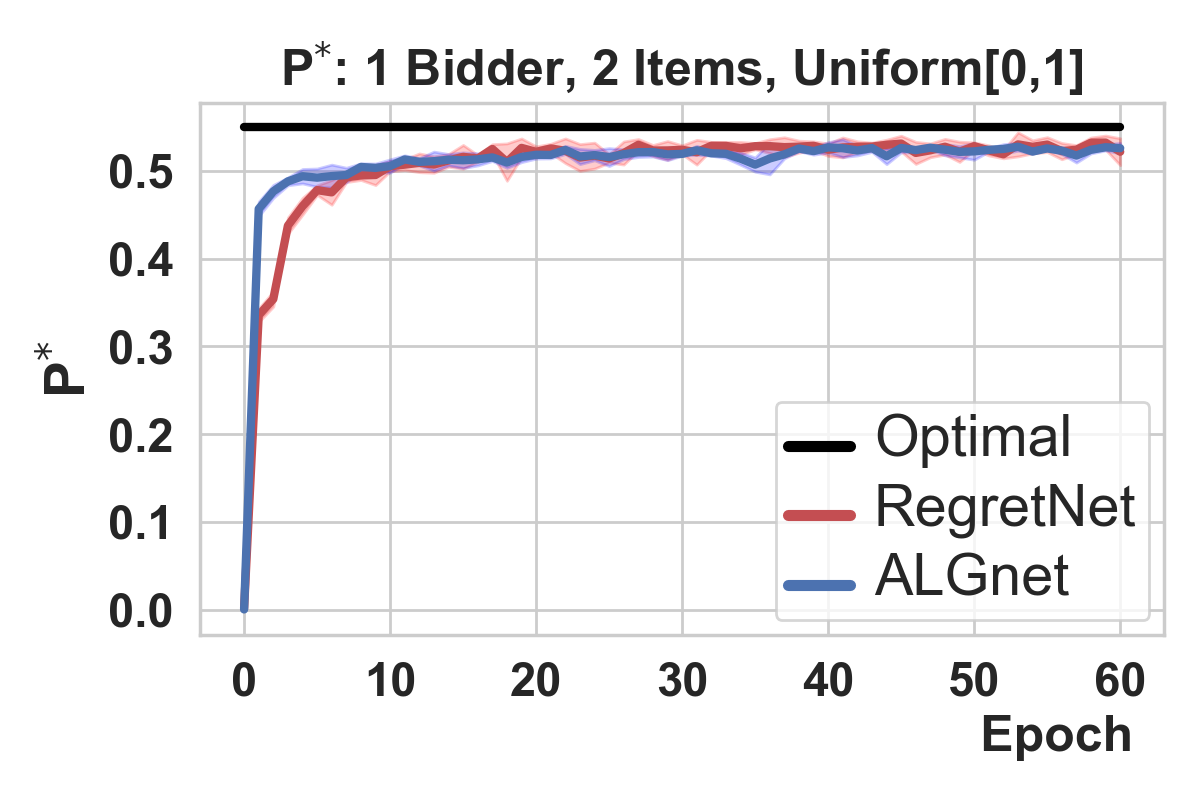} 
  \vspace{-2.5em}
  \caption{}
  \label{fig:3}
\end{subfigure}
\begin{subfigure}{0.33\textwidth}
  \centering
  \includegraphics[width=1.05\linewidth]{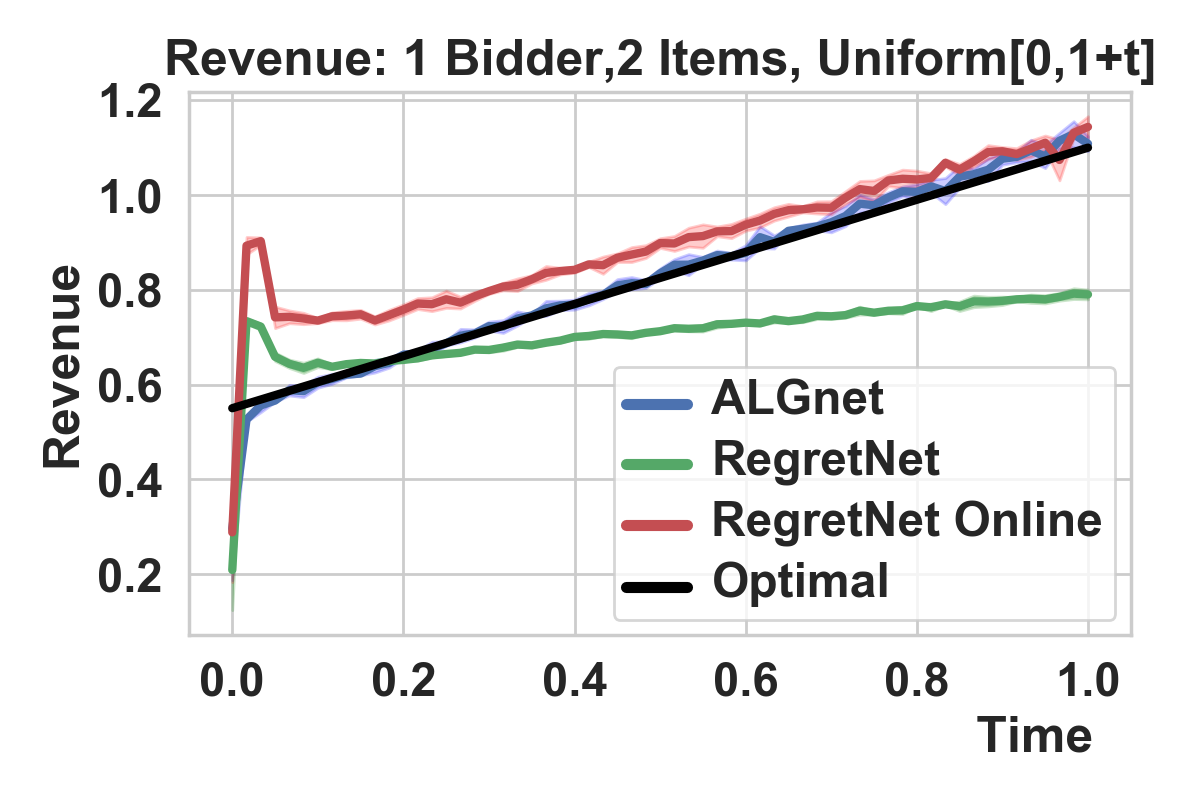}
  \vspace{-2.5em}
  \caption{}
  \label{fig:4}
\end{subfigure}
\begin{subfigure}{0.33\textwidth}
  \centering
  \includegraphics[width=1.05\linewidth]{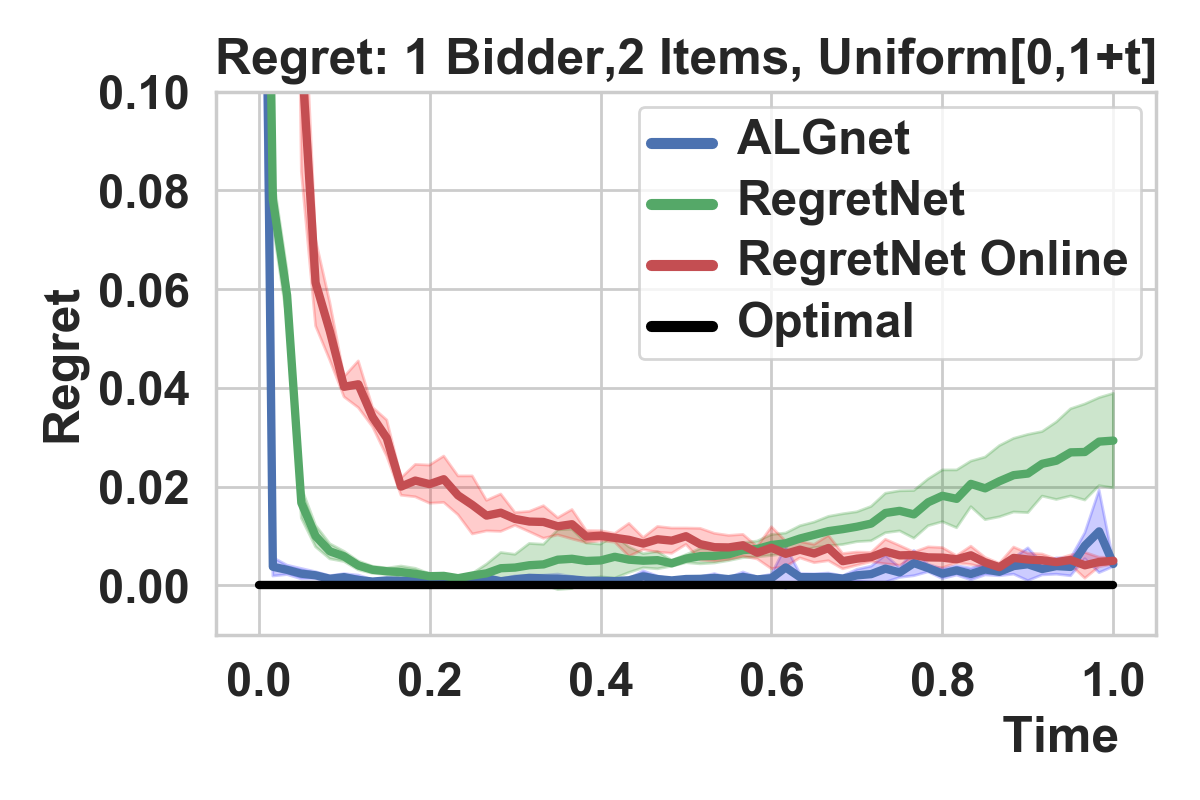}
  \vspace{-2.5em}
  \caption{}
  \label{fig:5}
\end{subfigure}
\begin{subfigure}{0.33\textwidth}
  \centering
  \includegraphics[width=1.05\linewidth]{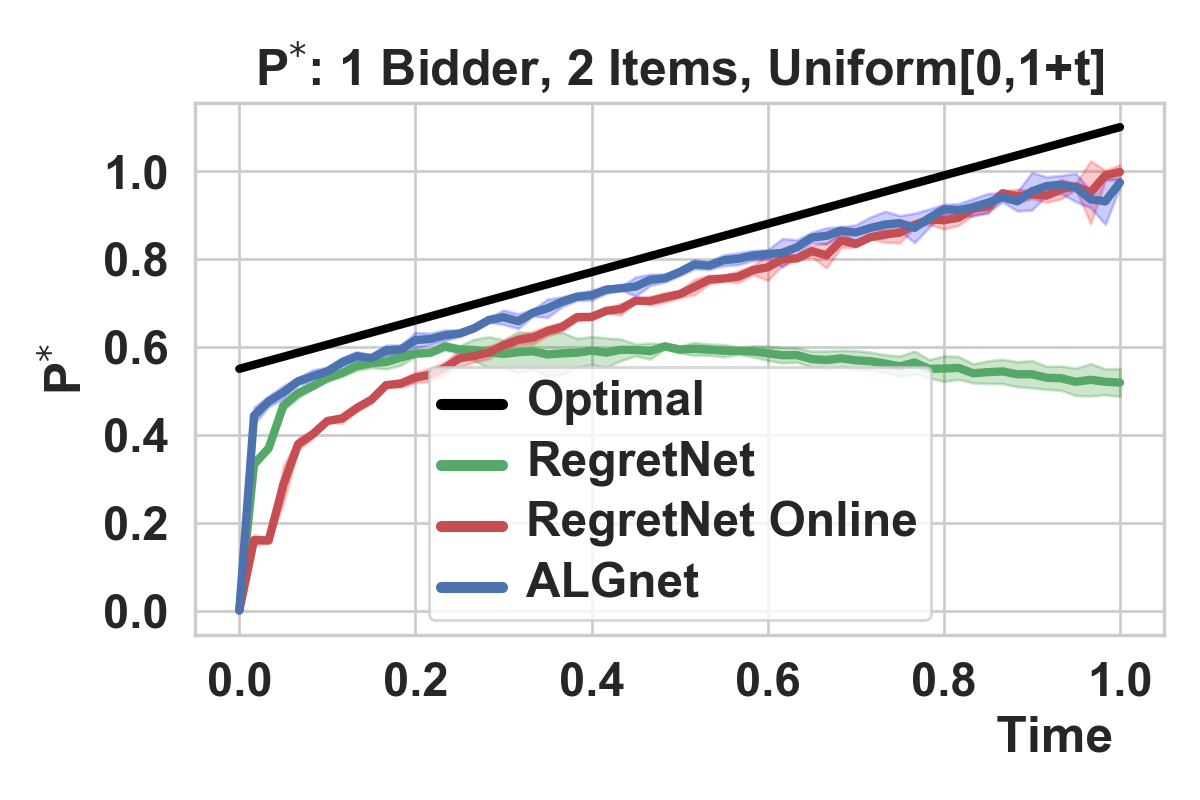}
  \vspace{-2.5em}
  \caption{}
  \label{fig:6}
\end{subfigure}
\vspace{-0.5em}
\caption{(a-b-c) compares the evolution of the revenue, regret and $P^{*}$ as a function of the number of epoch for RegretNet and ALGnet for setting (A).  (d-e-f) plots the the revenue, regret and $P^{*}$ as a function of time for ALGnet and (offline \& online) RegretNet for an online auction (Section~\ref{sec:online}).  } 
\label{fig:comparaisonRegretNet}
\end{figure}

\paragraph{Unknown and large-scale auctions.} We now consider settings where the optimal auction is unknown. We look at $n$-bidder $m$-item additive settings where the valuations are sampled i.i.d from $\mathcal{U}[0,1]$ which we will denote  by $n \times m$. In addition to "reasonable"-scale auctions ($1\times 10$ and $2\times 2$), we investigate large-scale auctions ($3\times 10$ and $5\times 10$) that are much more complex. Only deep learning methods are able to solve them efficiently.  \autoref{fig:table_large_auct} shows that  ALGnet is able to discover auctions that yield comparable or better results than RegretNet.

\begin{table}[h]
\caption{Comparison of RegretNet and ALGnet. The values reported for RegretNet are found in \cite{dutting2017optimal}, the numerical values for $\mathit{rgt}$ and standard deviations are not available.
}\label{fig:table_large_auct}

\begin{subfigure}[b]{0.99\textwidth}
\begin{center}
\begin{tabular}{lccccccc}
\toprule
\multirow{2}{*}{Setting \quad \quad \quad \quad} & 
\multicolumn{2}{c}{RegretNet \quad \quad }& \multicolumn{2}{c}{ALGnet (Ours)}
\\
\cmidrule(lr){2-3}
\cmidrule(lr){4-5}
& $\mathit{rev}$ & $\mathit{rgt}$ & $\mathit{rev}$ & $\mathit{rgt}$
\\
\cmidrule(lr){1-1}
\cmidrule(lr){2-2}
\cmidrule(lr){3-3}
\cmidrule(lr){4-4}
\cmidrule(lr){5-5}
$1 \times 2 $  
& $0.554$ & 
$< 1.0 \cdot 10^{-3}$ & 0.555 $(\pm 0.0019)$ & $0.55 \cdot 10^{-3} (\pm 0.14 \cdot 10^{-3})$ 
\\
% \hline
$1 \times 10 $
& 3.461 & $< 3.0 \cdot 10^{-3}$ & 3.487 $(\pm 0.0135)$ & $1.65 \cdot 10^{-3} (\pm 0.57 \cdot 10^{-3})$  
\\
% \hline

$2 \times 2 $
& 0.878 & $< 1.0 \cdot 10^{-3}$ & 0.879 $(\pm 0.0024)$ & $0.58 \cdot 10^{-3} (\pm 0.23 \cdot 10^{-3})$
\\ 
% \hline

$3 \times 10 $ 
& 5.541 & $<2.0 \cdot 10^{-3}$ & 5.562 $(\pm 0.0308)$ & $ 1.93 \cdot 10^{-3} (\pm 0.33 \cdot 10^{-3})$ 
\\
% \hline

$5 \times 10 $
& 6.778 & $ < 5.0 \cdot 10^{-3}$ & 6.781 $(\pm 0.0504)$ & $3.85 \cdot 10^{-3} (\pm 0.43 \cdot 10^{-3})$\\
\bottomrule
\end{tabular}
\end{center}
\end{subfigure}
\end{table}

\subsection{Online auctions}
\label{sec:online}

 ALGnet is an online algorithm with a time-independent loss function. We would expect it to perform well in settings where the underlying distribution of the valuations  changes over time. We consider a one bidder and  two items additive auction with valuations $v_1$ and $v_2$ sampled i.i.d from $\mathcal{U}[0,1+t]$ where $t$ in increased from $0$ to $1$ at a steady rate. The optimal auction at time $t$ has revenue $0.55\times(1+t)$.
 We use ALGnet and two versions of RegretNet, the original offline version (Appendix A) and our own online version (Appendix B) and plot $\mathit{rev}(t)$, $\mathit{rgt}(t)$ and $P^{*}(t)$ (\autoref{fig:comparaisonRegretNet}). The offline version learns from a fixed dataset of valuations sampled  at $t=0$ (i.e. with $V \sim \mathcal{U}[0,1]^{nm}$) while the online versions (as ALGnet) learns from a stream of data at each time $t$.
 Overall, ALGnet performs better than the other methods. It learns an optimal auction faster at the initial (especially compared to RegretNet Online) and keep adapting to the distributional shift (contrary to vanilla RegretNet).

\section{Conclusion}

We identified two inefficiencies in previous approaches to deep auction design and propose solutions, building upon recent trends and results from machine learning (amortization) and theoretical auction design (stationary Lagrangian). This resulted in a novel formulation of auction learning as a two-player game between an Auctioneer and a Misreporter and a new architecture ALGnet. ALGnet requires significantly fewer hyperparameters than previous Lagrangian approaches. We demonstrated the effectiveness of ALGnet on a variety of examples by comparing it to the theoretical optimal auction when it is known, and to RegretNet when the optimal solution is not known. 

{\bf Acknowledgements.} Jad Rahme would like to thank Ryan P. Adams for helpful discussions and feedback on the manuscript. Samy Jelassi thanks Arthur Mensch for fruitful discussions on the subject and feedback on the manuscript. The work of Jad Rahme was funded by a Princeton SEAS Innovation Grant. The work of Samy Jelassi is supported by the NSF CAREER CIF 1845360. The work of S. Matthew Weinberg was supported by NSF CCF-1717899.

\bibliography{biblio}
\bibliographystyle{iclr2021_conference}

\clearpage

\appendix

\section{Training Algorithm for Regret Net}\label{app:dutt_train_alg}

We present the training algorithm for RegretNet, more details can be found in  \citet{dutting2017optimal}.
\begin{algorithm}[h]
\caption{Training Algorithm.} 
\label{alg:duttingAlgorithm}
\begin{algorithmic}[1]
%\State Initialization:
\State \textbf{Input}: Minibatches $\mathcal{S}_1,\dots,\mathcal{S}_T$ of size $B$
\State \textbf{Parameters}: $\gamma >0, \, \eta >0, \, c>0, \, R\in \mathbb{N}, \,  T\in \mathbb{N}, \,T_{\rho}\in \mathbb{N}, \,T_{\lambda}\in \mathbb{N}.$
\State \textbf{Initialize Parameters}: $ \rho^0 \in \mathbb{R}, \, w^0\in \mathbb{R}^d, \, \lambda^0\in\mathbb{R}^n,  $

\State \textbf{Initialize Misreports:} ${v_i'}^{(\ell)}\in \mathcal{D}_i, \,\, \forall \ell \in [B], \, i\in N.$ \\
%\State Main routine:
\For {$t=0,\dots,T$}
        \State Receive minibatch $\mathcal{S}_t = \{V^{(1)},\dots,V^{(B)}\}.$
        \For{$r=0,\dots,R$}
        \State
        \vspace{-1.1cm}
        
        \begin{align*}
            \hspace{.6cm}&\forall \ell \in [B], \, i\in n:\\
            &{v_i'}^{(\ell)} \leftarrow {v_i'}^{(\ell)}+\gamma \nabla_{v_i'}u_i^{w^t} ({v_i}^{(\ell)};({v_i'}^{(\ell)},V_{-i}^{(\ell)}))
        \end{align*}
        \EndFor
        
        \vspace{-.4cm}
        \\
        \State Get Lagrangian gradient 
        % using \eqref{eq:grad_Lag} 
        and update $w^t$:\\
         \hspace{1cm} $w^{t+1}\leftarrow w^t-\eta \nabla_w \mathcal{L}(w^t;\lambda^t;\rho^t). $
         \\
        \State Update $\rho$ once in $T_{\rho}$ iterations:
        \If{$t$ is a multiple of $T_{\rho}$}
        \State $\rho^{t+1} \leftarrow \rho^{t} + c $ 
        \Else
        \State $\rho^{t+1}\leftarrow \rho^t$
        \EndIf
         \\
        \State Update Lagrange multipliers once in $T_{\lambda}$ iterations:
        \If{$t$ is a multiple of $T_{\lambda}$}
        \State $\lambda_i^{t+1}\leftarrow \lambda_i^t + \rho^t \, \widehat{r}_i(w^{t}),\forall i\in N$
        \Else
        \State $\lambda^{t+1}\leftarrow \lambda^t$
        \EndIf
\EndFor
%\EndProcedure
\end{algorithmic}
\end{algorithm}

\clearpage
\section{Training algorithm for Online Regret Net}\label{app:dutt_train_alg_online}

We present an online version of the training algorithm for RegretNet, more details can be found in \citet{dutting2017optimal}. This version in mentionned in the orginal paper but the algorithm is not explicitly written there. The following code is our own adaptation of the original RegretNet algorithm for online settings.

\begin{algorithm}[h]
\caption{Training Algorithm.} 
\label{alg:duttingAlgorithm2}
\begin{algorithmic}[1]
%\State Initialization:
\State \textbf{Input}: Valuation's Distribution $\mathcal{D}$
\State \textbf{Parameters}: $\gamma >0, \, \eta >0, \, c>0, \, R\in \mathbb{N}, \,  T\in \mathbb{N}, \,T_{\rho}\in \mathbb{N}, \,T_{\lambda}\in \mathbb{N}, B \in \mathbb{N}$
\State \textbf{Initialize Parameters}: $ \rho^0 \in \mathbb{R}, \, w^0\in \mathbb{R}^d, \, \lambda^0\in\mathbb{R}^n,  $

%\State Main routine:
\For {$t=0,\dots,T$}
        \State Sample minibatch $\mathcal{S}_t = \{V^{(1)},\dots,V^{(B)}\}$ from distribution $\mathcal{D}$.
        \State {Initialize Misreports:} ${v_i'}^{(\ell)}\in \mathcal{D}_i, \,\, \forall \ell \in [B], \, i\in N.$ \\
        \For{$r=0,\dots,R$}
        \State
        \vspace{-1.1cm}
        
        \begin{align*}
            \hspace{.6cm}&\forall \ell \in [B], \, i\in n:\\
            &{v_i'}^{(\ell)} \leftarrow {v_i'}^{(\ell)}+\gamma \nabla_{v_i'}u_i^{w^t} ({v_i}^{(\ell)};({v_i'}^{(\ell)},V_{-i}^{(\ell)}))
        \end{align*}
        \EndFor
        
        \vspace{-.4cm}
        \\
        \State Get Lagrangian gradient 
        % using \eqref{eq:grad_Lag} 
        and update $w^t$:\\
         \hspace{1cm} $w^{t+1}\leftarrow w^t-\eta \nabla_w \mathcal{L}(w^t;\lambda^t;\rho^t). $
         \\
        \State Update $\rho$ once in $T_{\rho}$ iterations:
        \If{$t$ is a multiple of $T_{\rho}$}
        \State $\rho^{t+1} \leftarrow \rho^{t} + c $ 
        \Else
        \State $\rho^{t+1}\leftarrow \rho^t$
        \EndIf
         \\
        \State Update Lagrange multipliers once in $T_{\lambda}$ iterations:
        \If{$t$ is a multiple of $T_{\lambda}$}
        \State $\lambda_i^{t+1}\leftarrow \lambda_i^t + \rho^t \, \widehat{r}_i(w^{t}),\forall i\in N$
        \Else
        \State $\lambda^{t+1}\leftarrow \lambda^t$
        \EndIf
\EndFor

\end{algorithmic}
\end{algorithm}

\clearpage

\section{Proof of \autoref{prop:main}}\label{app:rubinst_wein}
\begin{lemma}
\label{lemma:1bidder}
Let $M$ be a one bidder $m$ item mechanism with expected revenue $P$ and expected regret $R$, then $\forall \epsilon > 0$, there exists a mechanism $M'$ with expected revenue $P' = (1-\epsilon)P - \frac{1-\epsilon}{\epsilon} R$ and zero expected regret, $R' = 0$. 
\end{lemma}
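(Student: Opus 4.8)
The plan is to prove the single-bidder reduction via the \emph{menu (taxation) representation} of a mechanism, following the $\epsilon$-truthful-to-truthful technique of \cite{rubinstein2018simple}. For a single additive bidder, $M=(g,p)$ is entirely described by its menu $\{(g(\vec b),p(\vec b))\}_{\vec b\in\mathbb{R}^m}$, and the quantity $U(\vec v):=\max_{\vec b}\,(\langle g(\vec b),\vec v\rangle-p(\vec b))$ is convex as a pointwise maximum of affine functions. By the envelope theorem a maximizing subgradient is $\nabla U(\vec v)=g(\vec b^\star(\vec v))\in[0,1]^m$, so $U$ is the indirect utility of an \emph{exactly} truthful and IR mechanism $\tilde M$ that assigns each type $\vec v$ its favorite option; moreover $U(\vec 0)=0$ because payments are nonnegative and IR forces $p(\vec 0)=0$. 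Writing $h(\vec v)=\langle g(\vec v),\vec v\rangle-p(\vec v)$ for the truthful utility in $M$, the regret is exactly the pointwise gap $r(\vec v)=U(\vec v)-h(\vec v)\ge0$, with $R=\mathbb{E}[r(\vec v)]$ and $P=\mathbb{E}[p(\vec v)]$.

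First I would fix $\epsilon\in(0,1)$ and construct $M'$ as a $(1-\epsilon)$-contracted, exactly truthful variant of $M$: take the convex indirect utility associated with $M$, rescale it (equivalently, scale the buyer's reported type by $1-\epsilon$ and the menu prices accordingly), and impose a calibrated surcharge so that no type can profitably deviate. The role of $\epsilon$ is to trade a multiplicative $(1-\epsilon)$ loss in revenue against the additive cost of buying back exact truthfulness, which I will bound by $\tfrac{1-\epsilon}{\epsilon}R$. The two convexity facts driving the accounting are the homogeneity-type bound $U((1-\epsilon)\vec v)\le(1-\epsilon)U(\vec v)$ (from convexity and $U(\vec0)=0$) and the subgradient inequality $U((1-\epsilon)\vec v)\ge U(\vec v)-\epsilon\langle\nabla U(\vec v),\vec v\rangle$; together these quantify how much surplus must be conceded to the buyer, and the excess over the truthful outcome of $M$ is precisely what the regret controls.

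Carrying out the revenue bookkeeping, the revenue of $M'$ equals $(1-\epsilon)$ times the revenue collected on the contracted menu minus the surcharge needed for incentive compatibility; relating this back to $P=\mathbb{E}[p(\vec v)]$ (rather than to the smaller ``favorite-option'' revenue $\mathbb{E}[p(\vec b^\star(\vec v))]$, which a naive menu mechanism would collect) and charging the deviation gap to the regret yields
\[
P'\ \ge\ (1-\epsilon)P-\tfrac{1-\epsilon}{\epsilon}R,\qquad R'=0 .
\]
The \textbf{main obstacle} is exactly this last relation: an exactly truthful mechanism that lets each type pick its best menu option can leak arbitrarily much revenue relative to $P$, so the crux is to show that the scaling together with the surcharge confines the total leakage to $\tfrac{1-\epsilon}{\epsilon}R$, using only convexity of $U$, the bound $\nabla U\in[0,1]^m$, and the definition of $r(\vec v)$. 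Finally, I would note that optimizing the free parameter by setting $\epsilon=\sqrt{R/P}$ gives $P'=(\sqrt P-\sqrt R)^2$, the single-bidder case of \autoref{prop:main}; the multi-bidder statement then follows by applying the reduction bidder-by-bidder, holding the other reports fixed, as in \autoref{lem:multibidder}.
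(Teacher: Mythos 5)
Your high-level strategy is in the same family as the paper's --- discount the menu, let each type re-optimize so that the result is exactly truthful, and charge the revenue loss to the regret --- but the decisive step is missing, and you flag this yourself (``the main obstacle is exactly this last relation''). The paper's proof needs none of the convexity machinery you set up. It defines $M'$ by $v' = \mbox{argmax}_{\tilde v\in D}\,\langle v, g(\tilde v)\rangle - (1-\epsilon)p(\tilde v)$, $g'(v)=g(v')$, $p'(v)=(1-\epsilon)p(v')$, and then combines exactly two pointwise inequalities: from the definition of regret, $\langle v, g(v')\rangle - p(v') \le \langle v,g(v)\rangle - p(v) + R(v)$; and from the optimality of $v'$ on the discounted menu, $\langle v, g(v')\rangle - (1-\epsilon)p(v') \ge \langle v, g(v)\rangle - (1-\epsilon)p(v)$. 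Summing these cancels all the allocation terms and leaves $\epsilon\, p(v') \ge \epsilon\, p(v) - R(v)$, hence $p'(v) = (1-\epsilon)p(v') \ge (1-\epsilon)p(v) - \frac{1-\epsilon}{\epsilon}R(v)$, and taking expectations finishes. This cancellation is the entire content of the lemma, and nothing in your write-up supplies a substitute for it: the two convexity facts you invoke, $U((1-\epsilon)v)\le(1-\epsilon)U(v)$ and the subgradient inequality, bound the \emph{buyer's} surplus, not the seller's revenue, and do not by themselves control the gap between the price of the option the buyer switches to and the price $p(v)$ the original mechanism collected.

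A second, related problem is that your construction is not pinned down. ``Scale the reported type by $1-\epsilon$ and impose a calibrated surcharge'' describes a different transformation from the paper's (the discounted menu has indirect utility $(1-\epsilon)\,U\bigl(v/(1-\epsilon)\bigr)$, a dilation of the type space rather than a contraction), and you never specify the surcharge or verify that the surcharged mechanism is incentive compatible. Since the whole difficulty of the lemma is the revenue accounting, and that accounting depends on exactly which mechanism you build, the proposal as written is a plan rather than a proof. The fix is straightforward: drop the envelope-theorem framing and run the two-inequality argument above. (Your closing remarks are also slightly off on one point: the multi-bidder statement does \emph{not} follow by applying the single-bidder reduction bidder-by-bidder --- the paper explicitly explains that changing one bidder's purchases perturbs the options available to the others, which is why Lemma 2 instead invokes the machinery of Rubinstein--Weinberg.)
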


\begin{proof}
For every valuation vector $v \in D$, let $g(v)$ and $p(v)$ denote the allocation vector and price that $M$ assigns to $v$.

We now consider the mechanism $M'$ that does the following:

\begin{itemize} 
    \item $g'(v) = g(v')$ 
    \item $p'(v) = (1-\epsilon) \, p(v') $
\end{itemize}
Where $v'$ is given by : $v' = \mbox{argmax}_{ \tilde{v}\in D} \,\, \langle v \,,\, g(\tilde{v}) \rangle - (1-\epsilon) \, p(\tilde{v}). $
By construction, the mechanism $M'$ has zero regret, all we have to do now is bound its revenue. If we denote by $R(v)$ the regret of the profile $v$ in the mechanism $M$, $R(v) = \mbox{max}_{ \tilde{v}\in D} \,\, \langle v \,,\, g(\tilde{v})-g(v) \rangle -  (p(\tilde{v})-p(v))$ we have.
\begin{align}
\langle v \,,\, g(v') \rangle -  p(v') &=  \langle v \,,\, g(v) \rangle -  p(v) + \langle v \,,\, g(v') -g(v) \rangle - (p(v')-p(v))  \\
& \leq  \langle v \, ,\, g(v) \rangle -  p(v) + R(v) 
\end{align}
Which we will write as: 
\begin{equation}
\langle v \, ,\, g(v) \rangle -  p(v)   \geq  \langle v \,,\, g(v') \rangle -  p(v') - R(v) 
\end{equation}
Second, we have by construction: 
\begin{align}
\langle v \,,\, g(v') \rangle - (1-\epsilon) p(v')  \geq \langle v \,,\, g(v) \rangle - (1-\epsilon) p(v)
\end{align}
By summing these two relations we find :
\begin{equation}
 p(v') \geq p(v) -\frac{R(v)}{\epsilon} 
\end{equation}
Finally we get that: 
\begin{equation}
 p'(v) \geq (1-\epsilon)\,p(v) - \frac{1-\epsilon}{\epsilon} \, R(v)
 \end{equation}
Taking the expectation we get: 
\begin{equation}
 P' \geq  (1-\epsilon) \, P - \frac{1-\epsilon}{\epsilon} \,R 
\end{equation}
\end{proof}

\begin{manualtheorem}{1} 
Let $\mathcal{M}$ be an additive auction with $1$ bidders and $m$ items. Let $P$ and $R$  denote the total expected revenue and regret, $P  = \mathbb{E}_{V \in D}\left[ p(V) \right]$ and $R = \mathbb{E}_{V \in D}\left[ r(V)\right]$.
There exists a mechanism $\mathcal{M}^{*}$ with expected revenue $P^{*} = \left(\sqrt{P }-\sqrt{R }\right)^2$ and zero regret $R^{*} = 0$. 

\end{manualtheorem}

\begin{proof}
From \autoref{lemma:1bidder} we know that $\forall \epsilon > 0 $, we can find a zero regret mechanism with revenue $P' = (1-\epsilon) \, P - \frac{1-\epsilon}{\epsilon} \,R$. 
By optimizing over $\epsilon$ we find that the best mechanism is the one correspond to $\epsilon = \sqrt{\frac{R}{P}}$. The resulting optimal revenue is given by: 
\begin{equation}
    P^{*} = (1-\sqrt{\frac{R}{P}})P - \frac{\sqrt{\frac{R}{P}}}{\sqrt{\frac{R}{P}}} R = P - 2\sqrt{PR} + R = \left(\sqrt{P}-\sqrt{R}\right)^2
\end{equation}
\end{proof}

\clearpage

\section{Implementation and Setup}

We implemented ALGnet in PyTorch and all our experiments can be run on Google's Colab plateform (with GPU). 
In \autoref{alg:gameAlgorithm}, we used batches of valuation profiles of size $B\in \{500\}$ and set $T \in \{160000, 240000 \}$, $T_{limit} \in \{40000, 60000 \}$, $T_{init} \in \{800, 1600\}$ and $\tau \in \{100\}.$

We used the AdamW optimizer \citep{loshchilov2017decoupled} to train the Auctioneer's and the Misreporter's networks with learning rate $\gamma \in \{0.0005,0.001\}.$ Typical values for the architecture's parameters are $n_a=n_p=n_m \in [3,7]$ and $h_p=h_n=h_m \in \{50,100,200\}$. These networks are similar in size to the ones used for RegretNet in \citet{dutting2017optimal}.

For each experiment, we compute the total revenue $\mathit{rev}:=\mathbb{E}_{V\sim D}[\sum_{i\in N} p_i^{w}(V)]$ and average regret $\mathit{rgt}:=\nicefrac{1}{n}\,\mathbb{E}_{V\sim D}[\sum_{i\in N} r_i^{w}(V)]$ using a test set of $10,000$ valuation profiles. We run each experiment 5 times with different random seeds and report the average and standard deviation of these runs.

\end{document}